\newtheorem{invariant}{Invariant}
\newtheorem{lemma}{Lemma}
\newtheorem{proposition}{Proposition}
\newenvironment{proof}{\emph{Proof:}}{$\Box$\newline}
\newcommand{\arcg}[2]{\langle #1,#2\rangle}
\newcommand{\ADJ}{\textrm{ADJ}}
\newcommand{\indeg}{\textsf{indeg}}
\newcommand{\outdeg}{\textsf{outdeg}}
\newcommand{\inv}{\textsf{in}}
\newcommand{\outv}{\textsf{out}}
\newcommand{\nexti}{\textsf{next}}
\newcommand{\previ}{\textsf{prev}}
\newcommand{\firsti}{\textsf{first}}
\newcommand{\dfsnum}{\textsf{traversal}}
\newcommand{\bfsnum}{\textsf{traversal}}
\newcommand{\bfslev}{\textsf{distance}}
\newcommand{\parent}{\textsf{parent}}
\newcommand{\enque}{\textsf{enque}}
\newcommand{\deque}{\textsf{deque}}
\newcommand{\eliminate}{\textsf{eliminate}}
\title{A Note on (Parallel) Depth- and Breadth-First Search by Arc Elimination}
\author{Jesper Larsson Tr\"aff\\
Vienna University of Technology\\
Faculty of Informatics, Institute of Information Systems\\ 
Research Group Parallel Computing\\
Favoritenstrasse 16/184-5, 1040 Vienna, Austria\\
email: \texttt{traff@par.tuwien.ac.at}}
\date{May 6th, 2013}
\begin{document}
\maketitle

\begin{abstract}
This note recapitulates an algorithmic observation for ordered
Depth-First Search (DFS) in directed graphs that immediately leads to
a parallel algorithm with linear speed-up for a range of processors
for non-sparse graphs. The note extends the approach to ordered
Breadth-First Search (BFS).  With $p$ processors, both DFS and BFS
algorithms run in $O(m/p+n)$ time steps on a shared-memory parallel
machine allowing concurrent reading of locations, e.g., a CREW PRAM,
and have linear speed-up for $p\leq m/n$. Both algorithms need $n$
synchronization steps.
\end{abstract}

\section{Introduction}

Depth- and Breadth-First Search are elementary graph traversal
procedures with simple, sequential
algorithms~\cite{CormenLeisersonRivestStein09,Tarjan72,Tarjan83}. Both
procedures pose problems for parallel implementation: the ordered
Depth-First Search (DFS) problem is $P$-complete~\cite{Reif85}, and
therefore unlikely to admit polylogarithmically fast, parallel
algorithm using only polynomial resources, whereas for Breadth-First
Search (BFS), no work-optimal, polylogarithmically fast parallel
algorithm is known. This note re-presents the simple, work-optimal,
linear time, parallel algorithm for Depth-First Search by Varman and
Doshi~\cite{VarmanDoshi87} (also Vishkin, personal communication, see
also~\cite{EdwardsVishkin12} which describes a number of further
applications), that can give linear speed-up for graphs that are not
too sparse, and extends the basic observation to Breadth-First Search
where an algorithm with similar properties is given. The idea is
simple: instead of examing arcs in the ``forwards'' direction, as in
standard, textbook formulations of DFS and
BFS~\cite{CormenLeisersonRivestStein09}, incoming, ``backwards'' arcs
are used to eliminate arcs that are no longer relevant for the
search. Whereas the standard algorithms have either conflicts (BFS)
and/or dependencies (DFS) that hamper parallelization, arc elimination
can be performed fully in parallel.

Let $G=(V,E)$ be a directed graph with $n=|V|$ vertices and $m=|E|$
arcs (directed edges). Vertices are assumed to be numbered
consecutively, such that $V=\{0,\ldots,n-1\}$. Arcs are ordered pairs
of vertices with $\arcg{u}{v}, u,v\in V$ denoting the arc directed
from source $u$ to target $v$.

It will be assumed that the input graph $G$ is given as an $n$-element
array $\ADJ$ of adjacency arrays. For each vertex $u\in V$,
$\ADJ[u].\outdeg$ stores the out-degree of $u$, and the target vertex
$v_i$ of the $i$th arc $\arcg{u}{v_i}$ for $0\leq i<\ADJ[u].\outdeg$ is
stored in $\ADJ[u].\outv[i]$.

Depth- and Breadth-First Search are procedures for graph traversal
starting from a given a start vertex $s\in V$. Both procedures assign
traversal numbers to the vertices, indicating the order in which they
are reached. Breadth-First Search additionally computes for each
vertex its distance (shortest path in number of traversed arcs) from
the start vertex. Both procedures also compute the search tree, which
will be represented by a parent pointer. For each vertex $u\in
V$, these computed values will be stored in $\ADJ[u].\dfsnum$,
$\ADJ[u].\bfslev$ and $\ADJ[u].\parent$ (for $u\neq s$), respectively.

The search procedures will modify the input graph by eliminating arcs,
and both maintain the following invariant.

\begin{invariant}
\label{inv:eliminated}
A vertex $v$ is called \emph{visited} when it has been assigned its
(Depth- or Breadth-First Search) traversal number.
Each visited vertex $v\in V$ will have no incoming arcs, that is,
there will be no arc $\arcg{u}{v}$ for any $u\in V$.
\end{invariant}

In order to maintain Invariant~\ref{inv:eliminated} the procedures
eliminate incoming arcs when a vertex is being visited. To do
this efficiently, each vertex $v\in V$ needs an array storing the
vertices $u\in V$ for which there is an arc $\arcg{u}{v}$, as well as
the index $i$ of $v$ in the array $\ADJ[u].\outv$ such that
$v=\ADJ[u].\outv[i]$. The arrays $\ADJ[v].\inv$ for each $v\in V$
shall store the pairs $(u,i)$ representing the incoming arcs of $v$ in
this fashion.

To eliminate the arc $\arcg{u}{v}$ from the adjacency array of $u$,
links (indices) to next and previous non-removed vertices in the array
are maintained, imposing a doubly linked list on each of the
adjacency arrays. The operation $\eliminate(G,u,i)$ removes
the $i$th vertex in $\ADJ[u].\outv$ by linking it out of the doubly
linked list. The adjacency array itself is not changed. Next and
previous indices are maintained in the arrays $\ADJ[u].\nexti$ and
$\ADJ[u].\previ$; $\ADJ[u].\firsti$ shall index the first
non-eliminated vertex in $\ADJ[u].\outv$.

Algorithm~\ref{alg:reverse} shows how to compute the array of incoming
arcs and the pointers for the doubly linked adjacency lists. A
\textbf{for}-construct indicates sequential execution for all values
in some index set (in some order), whereas the \textbf{par}-construct
indicates that the computations for each element in the index set can
be performed in parallel by the available processors. All processors
are assumed to have access to the same memory, and concurrent reading
is allowed. Synchronization is implied at the end of each
\textbf{par}-construct.

\begin{algorithm}
\caption{Computing incoming arcs for each $v\in V$ and doubly linked
  adjacency lists.}
\label{alg:reverse}
\begin{algorithmic}
\PAR{$u\in V$}
\STATE $\ADJ[u].\indeg\gets 0$, 
\STATE $\ADJ[u].\firsti\gets 0$, 
\ENDPAR
\FOR{$u\in V$}
\PAR{$0\leq i<\ADJ[u].\outdeg$}
\STATE $v\gets \ADJ[u].\outv[i]$
\STATE $d\gets \ADJ[v].\indeg$
\STATE $\ADJ[v].\inv[d]\gets (u,i)$ \COMMENT{Add incoming arc $\arcg{u}{v}$ to $v$}
\STATE $\ADJ[v].\indeg\gets d+1$
\STATE $\ADJ[u].\nexti[i]\gets i+1$
\STATE $\ADJ[u].\previ[i]\gets i-1$
\ENDPAR
\ENDFOR
\end{algorithmic}
\end{algorithm}

\begin{lemma}
Algorithm~\ref{alg:reverse} computes the array of $(u,i)$ vertex-index
pairs representing the incoming arcs for all vertices $v\in V$. It
also initializes the doubly linked lists over the adjacency arrays
$\ADJ[u].\outv$.  The algorithm runs in $O(m/p+n)$ time steps with $p$
processors using $O(m+n)$ additional space.
\end{lemma}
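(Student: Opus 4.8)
The plan is to establish the three assertions separately: (i) correctness of the computed arrays $\ADJ[v].\inv$ of incoming arcs, (ii) correctness of the initialized doubly linked lists over the adjacency arrays, and (iii) the claimed time and space bounds. The only genuinely nontrivial point is that the inner \textbf{par}-construct performs concurrent updates to the $\inv$ arrays and to the $\indeg$ counters, so the crux of the argument is to show that these updates never collide and that exclusive-write semantics therefore suffice.

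For (i) I would induct on the iterations of the outer \textbf{for} loop over $u\in V$, maintaining the invariant: \emph{after the iteration for vertex $u$ has completed, for every $v\in V$ the value $\ADJ[v].\indeg$ equals the number of arcs $\arcg{u'}{v}$ with $u'\le u$, and the first $\ADJ[v].\indeg$ cells of $\ADJ[v].\inv$ contain exactly the pairs $(u',i')$ with $\ADJ[u'].\outv[i']=v$ and $u'\le u$, each occurring once.} The base case follows from the first \textbf{par}-construct, which sets every $\ADJ[v].\indeg$ to $0$. For the inductive step, the body of the inner \textbf{par}-construct for source $u$ reads $d=\ADJ[v].\indeg$, writes $(u,i)$ into $\ADJ[v].\inv[d]$, and increments $\ADJ[v].\indeg$; provided these reads and writes are serialized correctly per target $v$, this appends exactly the pairs contributed by the arcs leaving $u$ to the appropriate lists. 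Taking $u=n-1$ yields the claim, since $\ADJ[v].\indeg$ then equals the in-degree of $v$.

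The main obstacle, and the heart of the correctness argument, is justifying the concurrency of the inner loop. Here I would observe that during the iteration for a fixed $u$ all active processors handle arcs with the common source $u$; assuming the out-neighbours of $u$ are pairwise distinct (that is, $G$ has no parallel arcs, the usual setting for DFS and BFS), the targets $v=\ADJ[u].\outv[i]$ are pairwise distinct, so no two processors ever touch the same $\ADJ[v].\indeg$ counter or the same cell of $\ADJ[v].\inv$, and the writes to $\ADJ[u].\nexti[i]$ and $\ADJ[u].\previ[i]$ are to distinct cells indexed by distinct $i$. Hence exclusive-write semantics suffice and no synchronization is needed inside the loop. (If parallel arcs are to be supported, one replaces the per-$v$ increments by a prefix sum over the sublist of indices $i$ sharing a target, at no asymptotic cost.) For (ii), note that the first \textbf{par}-construct sets $\ADJ[u].\firsti=0$ and the inner loop sets $\ADJ[u].\nexti[i]=i+1$ and $\ADJ[u].\previ[i]=i-1$ for all $0\le i<\ADJ[u].\outdeg$; this is precisely the doubly linked list representing the full sequence $0,1,\ldots,\ADJ[u].\outdeg-1$ over $\ADJ[u].\outv$, with the out-of-range values $\ADJ[u].\outdeg$ and $-1$ serving as end sentinels.

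For (iii), the first \textbf{par}-construct does $O(1)$ work per vertex, hence $O(\ceiling{n/p})$ time. The iteration of the outer loop for $u$ runs a \textbf{par}-construct over $\ADJ[u].\outdeg$ elements with $O(1)$ work each, taking $O(\ceiling{\ADJ[u].\outdeg/p})$ time plus $O(1)$ for the implied synchronization; summing over all $u\in V$ and using $\sum_{u}\ADJ[u].\outdeg=m$ gives $O(m/p+n)$ total time and $n+1$ synchronization steps. For space, the arrays $\ADJ[v].\inv$ hold $\sum_{v}\indeg(v)=m$ pairs in total, the $\nexti$ and $\previ$ arrays hold $m$ entries each, and the $\indeg$ and $\firsti$ fields add $O(n)$, for $O(m+n)$ additional space; if the $\inv$ arrays are to be laid out contiguously in a single block, a preliminary pass computing in-degrees followed by a parallel prefix sum supplies the offsets within the same asymptotic bounds.
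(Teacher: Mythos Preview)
Your proposal is correct and follows essentially the same approach as the paper: the key observation in both is that, within the inner \textbf{par}-construct for a fixed source $u$, the target vertices $v$ are pairwise distinct, so the updates to $\ADJ[v].\indeg$ and $\ADJ[v].\inv$ never collide and the parallel step costs $O(\outdeg(u)/p)$, which sums to $O(m/p+n)$. Your write-up is considerably more detailed than the paper's (you supply an explicit loop invariant, make the no-parallel-arcs assumption explicit, and carry out the space accounting), but the underlying argument is the same.
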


\begin{proof}
In each sequential iteration over the set of vertices, incoming arcs
are added to \emph{different} target vertices. For each $u\in V$ this
can therefore be done in parallel by the $p$ available processors in
$O(d(u)/p)$ time steps, where $d(u)$ is the outdegree of vertex
$u$, provided that all processors can read the start address of the array. 
The total time is $O(n+\sum_{u\in V}d(u)/p)=O(m/p+n)$.
\end{proof}

The $\ADJ[u].\firsti$ indices for each $u\in V$ will be maintained
such that $\ADJ[u].\firsti<\ADJ[u].\outdeg$ indicates a non-empty list
of non-eliminated arcs out of $u$. The \eliminate\ operation is
straightforward. 

\section{Depth-First Search}

We can now present the parallel Depth-First Search algorithm. The DFS
procedure is called with a start vertex $s\in V$ and its DFS
number $a$, and computes a DFS tree with reachable vertices numbered
successively in DFS order starting from $a$.  Each recursive call
visits a new vertex, assigns it a DFS number, establishes
Invariant~\ref{inv:eliminated} by eliminating, in parallel, all arcs
into the vertex, and then recursively DFS numbers the subtree from the
first non-eliminated arc $\arcg{s}{v}$ out of $s$; outgoing arcs are
always considered in the fixed order as given in the adjacency array
representation of $G$. The recursion traverses the vertices in $G$
reachable from $s$ in DFS order.  The algorithm is given in detail as
Algorithm~\ref{alg:dfs}, and is essentially as described by Varman and
Doshi~\cite{VarmanDoshi87}.

\begin{algorithm}
\caption{Recursive, parallel Depth-First Search from start vertex
  $s\in V$. Vertices that are reachable from $s$ will be assigned
  successive DFS numbers starting from $a$.}
\label{alg:dfs}
\begin{algorithmic}
\STATE\textbf{Procedure} $\textsf{DFS}(s,G,a)$:
\PAR{$0\leq i<\ADJ[s].\indeg$}
\STATE $(u,j)\gets \ADJ[s].\inv[i]$
\STATE $\eliminate(G,u,j)$
\ENDPAR
\STATE $\ADJ[s].\dfsnum\gets a$ \COMMENT{Vertex $s$ now visited}
\STATE $a\gets a+1$
\WHILE[As long as there are un-eliminated arcs]{$\ADJ[s].\firsti<\ADJ[s].\outdeg$}
\STATE $i\gets\ADJ[s].\firsti$
\STATE $v\gets\ADJ[s].\outv[i]$
\STATE $\ADJ[v].\parent\gets s$
\STATE $a\gets \textsf{DFS}(v,G,a)$
\ENDWHILE
\RETURN $a$
\end{algorithmic}
\end{algorithm}

\begin{proposition}
Algorithm~\ref{alg:dfs} computes an ordered Depth-First Search
numbering and tree in $O(m/p+n)$ time steps using $p$ processors.
\end{proposition}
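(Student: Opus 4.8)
The plan is to establish the proposition in two parts: first, that Algorithm~\ref{alg:dfs} produces exactly the traversal order, numbering, and search tree of the standard recursive textbook DFS from $s$ (so that it is correct as an \emph{ordered} DFS), and second, that its parallel running time on a CREW PRAM is $O(m/p+n)$. The correctness argument rests on a single structural fact: thanks to Invariant~\ref{inv:eliminated}, the index $\ADJ[s].\firsti$ inspected in the \textbf{while} loop always points to the first outgoing arc $\arcg{s}{v}$ of $s$, in the fixed adjacency order, whose target $v$ has not yet been visited. Granted this, the loop behaves exactly like the textbook step ``for each outgoing arc in order, recurse on the target if it has not yet been visited''.

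For correctness I would argue by induction over the sequence of recursive calls to $\textsf{DFS}$, maintaining two claims. (i) Invariant~\ref{inv:eliminated} holds: a call $\textsf{DFS}(s,G,a)$ first eliminates, using $\ADJ[s].\inv$, every incoming arc of $s$ before setting $\ADJ[s].\dfsnum$, and no arc is ever created; moreover every pair in $\ADJ[s].\inv$ still denotes a live arc at that moment, since an arc into $s$ can only have been eliminated when $s$ itself was visited. (ii) At any point during the execution, the non-eliminated arcs out of a vertex $s$ are precisely the $\arcg{s}{v}$ with $v$ unvisited, appearing in their original order, with $\ADJ[s].\firsti$, $\ADJ[s].\previ$, $\ADJ[s].\nexti$ correctly maintaining this sublist; this holds because $\arcg{s}{v}$ is linked out of $\ADJ[s].\outv$ exactly once, as the first action of $\textsf{DFS}(v,G,\cdot)$. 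Hence the test $\ADJ[s].\firsti<\ADJ[s].\outdeg$ succeeds iff $s$ still has an unvisited out-neighbour, the chosen $v=\ADJ[s].\outv[\ADJ[s].\firsti]$ is the first such neighbour in adjacency order, the assignment $\ADJ[v].\parent\gets s$ records the same tree edge as textbook DFS, and the recursive call — which begins by eliminating $\arcg{s}{v}$ — forces $\ADJ[s].\firsti$ to advance strictly, so each \textbf{while} loop terminates and each reachable vertex is visited exactly once. Threading $a$ through the recursion assigns consecutive numbers in preorder. A straightforward induction then shows that the sequence of visited vertices, the assigned $\dfsnum$ values, and the $\parent$ pointers coincide with those of the unique textbook DFS from $s$ started at $a$, restricted to the vertices reachable from $s$.

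For the running time, note that over the whole execution each arc is eliminated at most once — it leaves the doubly linked out-list of its source the single time its target is visited — and each $\eliminate$ call costs $O(1)$. The \textbf{par} loop inside $\textsf{DFS}(s,G,\cdot)$ performs $\ADJ[s].\indeg$ eliminations on distinct arcs, in $O(\ceiling{\ADJ[s].\indeg/p})$ time steps with $p$ processors, all processors needing only to read the shared base address of $\ADJ[s].\inv$, which a CREW PRAM permits. Summing over the at most $n$ visited vertices gives $\sum_{s}(\ADJ[s].\indeg/p+1)=O(m/p+n)$. All remaining work is $O(1)$ per recursive call — assigning $\dfsnum$, incrementing $a$, and each successful iteration and the final failing test of the \textbf{while} loop, whose total number over all calls is $O(n)$ because there are at most $n$ recursive calls and at most $n$ vertices carrying a loop — contributing a further $O(n)$. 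Each call contains exactly one \textbf{par}-construct, so the algorithm uses at most $n$ synchronization steps; adding the $O(m/p+n)$ cost of Algorithm~\ref{alg:reverse} for the preprocessing yields the stated bound.

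The main obstacle is claim (ii): one must carefully justify that the doubly linked list bookkeeping really does expose, at $\ADJ[s].\firsti$, the first unvisited out-neighbour of $s$ in the original adjacency order at \emph{every} moment the \textbf{while} loop reads it — in particular across the return from a recursive call, during which arbitrarily many of $s$'s outgoing arcs (those pointing to vertices visited deep in the recursion) may have been eliminated in one or more of the \textbf{par} loops. Once this invariant is secured, both the exact equivalence with textbook DFS and the linear-work accounting are routine.
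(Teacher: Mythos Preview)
Your proposal is correct and follows essentially the same approach as the paper's own proof: use Invariant~\ref{inv:eliminated} to conclude that visited vertices are never revisited, observe that $\ADJ[s].\firsti$ therefore always indexes the first out-neighbour in adjacency order that is still unvisited (so the recursion order matches textbook DFS), and obtain the time bound by summing the parallel elimination cost $O(\ADJ[s].\indeg/p)$ over the at most $n$ visited vertices. Your write-up is considerably more detailed than the paper's—particularly your explicit treatment of the linked-list invariant (your claim~(ii)) across recursive returns, and the accounting for the \textbf{while}-loop iterations—but there is no difference in the underlying argument.
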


\begin{proof}
By Invariant~\ref{inv:eliminated} once a vertex $v$ is visited, it
will never be considered again, since all arcs into $v$ will have been
eliminated. Therefore, each vertex in $G$ that is reachable from $s$
will be visited once. The time complexity is immediate: when a vertex
is visited the incoming arcs are eliminated in parallel. Since
$\ADJ[u].\firsti$ will for each vertex be the index of the first
adjacent vertex $v$ where the arc $\arcg{u}{v}$ has not been
eliminated, the order in which vertices are visited is the same as
standard DFS search procedures, from which the correctness follows.
\end{proof}

The algorithm also computes a DFS tree by setting parent pointers for
the visited vertices. Note that it can easily be extended to classify
arcs into backwards, forwards, tree and cross arcs, as
sometimes desirable by a DFS traversal, without changing the time
bounds. An example execution of the
algorithm is given in Figure~\ref{fig:dfs}.

\begin{figure}
\begin{center}
\begin{tabular}{ccccc}
\begin{tikzpicture}
  [scale=.7,auto=left,every node/.style={circle,draw,fill=none}]

  \foreach \pos/\name in {
    {(1,3)/n0},
    {(0,2)/n1}, {(1,2)/n2}, {(2,2)/n3}, {(3,2)/n4},
    {(1,1)/n5}, {(2,1)/n6},
    {(2,0)/n7}, {(3,0)/n8}}{
    \node (\name) at \pos {};
  }

  \foreach \from/\to in {n0/n1,n0/n2,n0/n3,n0/n4,n1/n5,n2/n5,n3/n5}{
    \draw[<->] (\from) -- (\to);
  }
  \foreach \from/\to in {n2/n3,n4/n3,n2/n6,n3/n6,n4/n6,n5/n7,n7/n6,n7/n8,n8/n3,n8/n4}{
    \draw[->] (\from) -- (\to);
  }
\end{tikzpicture}
&
\begin{tikzpicture}
  [scale=.7,auto=left,every node/.style={circle,draw,fill=none}]

  \node[font=\tiny,fill=gray] (n0) at (1,3) {$0$};
  \foreach \pos/\name in {
    {(0,2)/n1}, {(1,2)/n2}, {(2,2)/n3}, {(3,2)/n4},
    {(1,1)/n5}, {(2,1)/n6},
    {(2,0)/n7}, {(3,0)/n8}}{
    \node (\name) at \pos {};
  }

  \foreach \from/\to in {n1/n5,n2/n5,n3/n5}{
    \draw[<->] (\from) -- (\to);
  }
  \foreach \from/\to in {n0/n1,n0/n2,n0/n3,n0/n4,n2/n3,n4/n3,n2/n6,n3/n6,n4/n6,n5/n7,n7/n6,n7/n8,n8/n3,n8/n4}{
    \draw[->] (\from) -- (\to);
  }
\end{tikzpicture}
&
\begin{tikzpicture}
  [scale=.7,auto=left,every node/.style={circle,draw,fill=none}]

  \node[font=\tiny,fill=gray] (n0) at (1,3) {$0$};
  \node[font=\tiny,fill=gray] (n1) at (0,2) {$1$};
  \foreach \pos/\name in {
    {(1,2)/n2}, {(2,2)/n3}, {(3,2)/n4},
    {(1,1)/n5}, {(2,1)/n6},
    {(2,0)/n7}, {(3,0)/n8}}{
    \node (\name) at \pos {};
  }

  \foreach \from/\to in {n2/n5,n3/n5}{
    \draw[<->] (\from) -- (\to);
  }
  \foreach \from/\to in {n1/n5,n0/n2,n0/n3,n0/n4,n2/n3,n4/n3,n2/n6,n3/n6,n4/n6,n5/n7,n7/n6,n7/n8,n8/n3,n8/n4}{
    \draw[->] (\from) -- (\to);
  }
  \draw[thick] (n0) -- (n1);
\end{tikzpicture}
&
\begin{tikzpicture}
  [scale=.7,auto=left,every node/.style={circle,draw,fill=none}]

  \node[font=\tiny,fill=gray] (n0) at (1,3) {$0$};
  \node[font=\tiny,fill=gray] (n1) at (0,2) {$1$};
  \node[font=\tiny,fill=gray] (n5) at (1,1) {$2$};
  \foreach \pos/\name in {
    {(1,2)/n2}, {(2,2)/n3}, {(3,2)/n4},
    {(2,1)/n6},
    {(2,0)/n7}, {(3,0)/n8}}{
    \node (\name) at \pos {};
  }

  \foreach \from/\to in {n5/n2,n5/n3,n0/n2,n0/n3,n0/n4,n2/n3,n4/n3,n2/n6,n3/n6,n4/n6,n5/n7,n7/n6,n7/n8,n8/n3,n8/n4}{
    \draw[->] (\from) -- (\to);
  }
  \draw[thick] (n0) -- (n1);
  \draw[thick] (n1) -- (n5);
\end{tikzpicture}
&
\begin{tikzpicture}
  [scale=.7,auto=left,every node/.style={circle,draw,fill=none}]

  \node[font=\tiny,fill=gray] (n0) at (1,3) {$0$};
  \node[font=\tiny,fill=gray] (n1) at (0,2) {$1$};
  \node[font=\tiny,fill=gray] (n5) at (1,1) {$2$};
  \node[font=\tiny,fill=gray] (n7) at (2,0) {$3$};
  \foreach \pos/\name in {
    {(1,2)/n2}, {(2,2)/n3}, {(3,2)/n4},
    {(2,1)/n6},
    {(3,0)/n8}}{
    \node (\name) at \pos {};
  }

  \foreach \from/\to in {n5/n2,n5/n3,n0/n2,n0/n3,n0/n4,n2/n3,n4/n3,n2/n6,n3/n6,n4/n6,n7/n6,n7/n8,n8/n3,n8/n4}{
    \draw[->] (\from) -- (\to);
  }
  \draw[thick] (n0) -- (n1);
  \draw[thick] (n1) -- (n5);
  \draw[thick] (n5) -- (n7);
\end{tikzpicture}
\\
\begin{tikzpicture}
  [scale=.7,auto=left,every node/.style={circle,draw,fill=none}]

  \node[font=\tiny,fill=gray] (n0) at (1,3) {$0$};
  \node[font=\tiny,fill=gray] (n1) at (0,2) {$1$};
  \node[font=\tiny,fill=gray] (n5) at (1,1) {$2$};
  \node[font=\tiny,fill=gray] (n7) at (2,0) {$3$};
  \node[font=\tiny,fill=gray] (n8) at (3,0) {$4$};
  \foreach \pos/\name in {
    {(1,2)/n2}, {(2,2)/n3}, {(3,2)/n4},
    {(2,1)/n6}}{
    \node (\name) at \pos {};
  }

  \foreach \from/\to in {n5/n2,n5/n3,n0/n2,n0/n3,n0/n4,n2/n3,n4/n3,n2/n6,n3/n6,n4/n6,n7/n6,n8/n3,n8/n4}{
    \draw[->] (\from) -- (\to);
  }
  \draw[thick] (n0) -- (n1);
  \draw[thick] (n1) -- (n5);
  \draw[thick] (n5) -- (n7);
  \draw[thick] (n7) -- (n8);
\end{tikzpicture}
&
\begin{tikzpicture}
  [scale=.7,auto=left,every node/.style={circle,draw,fill=none}]

  \node[font=\tiny,fill=gray] (n0) at (1,3) {$0$};
  \node[font=\tiny,fill=gray] (n1) at (0,2) {$1$};
  \node[font=\tiny,fill=gray] (n5) at (1,1) {$2$};
  \node[font=\tiny,fill=gray] (n7) at (2,0) {$3$};
  \node[font=\tiny,fill=gray] (n8) at (3,0) {$4$};
  \node[font=\tiny,fill=gray] (n4) at (3,2) {$5$};
  \foreach \pos/\name in {
    {(1,2)/n2}, {(2,2)/n3},
    {(2,1)/n6}}{
    \node (\name) at \pos {};
  }

  \foreach \from/\to in {n5/n2,n5/n3,n0/n2,n0/n3,n2/n3,n4/n3,n2/n6,n3/n6,n4/n6,n7/n6,n8/n3}{
    \draw[->] (\from) -- (\to);
  }
  \draw[thick] (n0) -- (n1);
  \draw[thick] (n1) -- (n5);
  \draw[thick] (n5) -- (n7);
  \draw[thick] (n7) -- (n8);
  \draw[thick] (n8) -- (n4);
\end{tikzpicture}
&
\begin{tikzpicture}
  [scale=.7,auto=left,every node/.style={circle,draw,fill=none}]

  \node[font=\tiny,fill=gray] (n0) at (1,3) {$0$};
  \node[font=\tiny,fill=gray] (n1) at (0,2) {$1$};
  \node[font=\tiny,fill=gray] (n5) at (1,1) {$2$};
  \node[font=\tiny,fill=gray] (n7) at (2,0) {$3$};
  \node[font=\tiny,fill=gray] (n8) at (3,0) {$4$};
  \node[font=\tiny,fill=gray] (n4) at (3,2) {$5$};
  \node[font=\tiny,fill=gray] (n3) at (2,2) {$6$};
  \foreach \pos/\name in {
    {(1,2)/n2},
    {(2,1)/n6}}{
    \node (\name) at \pos {};
  }

  \foreach \from/\to in {n5/n2,n0/n2,n2/n6,n3/n6,n4/n6,n7/n6}{
    \draw[->] (\from) -- (\to);
  }
  \draw[thick] (n0) -- (n1);
  \draw[thick] (n1) -- (n5);
  \draw[thick] (n5) -- (n7);
  \draw[thick] (n7) -- (n8);
  \draw[thick] (n8) -- (n4);
  \draw[thick] (n4) -- (n3);
\end{tikzpicture}
&
\begin{tikzpicture}
  [scale=.7,auto=left,every node/.style={circle,draw,fill=none}]

  \node[font=\tiny,fill=gray] (n0) at (1,3) {$0$};
  \node[font=\tiny,fill=gray] (n1) at (0,2) {$1$};
  \node[font=\tiny,fill=gray] (n5) at (1,1) {$2$};
  \node[font=\tiny,fill=gray] (n7) at (2,0) {$3$};
  \node[font=\tiny,fill=gray] (n8) at (3,0) {$4$};
  \node[font=\tiny,fill=gray] (n4) at (3,2) {$5$};
  \node[font=\tiny,fill=gray] (n3) at (2,2) {$6$};
  \node[font=\tiny,fill=gray] (n6) at (2,1) {$7$};
  \foreach \pos/\name in {
    {(1,2)/n2}}{
    \node (\name) at \pos {};
  }

  \foreach \from/\to in {n5/n2,n0/n2}{
    \draw[->] (\from) -- (\to);
  }
  \draw[thick] (n0) -- (n1);
  \draw[thick] (n1) -- (n5);
  \draw[thick] (n5) -- (n7);
  \draw[thick] (n7) -- (n8);
  \draw[thick] (n8) -- (n4);
  \draw[thick] (n4) -- (n3);
  \draw[thick] (n3) -- (n6);
\end{tikzpicture}
&
\begin{tikzpicture}
  [scale=.7,auto=left,every node/.style={circle,draw,fill=none}]

  \node[font=\tiny,fill=gray] (n0) at (1,3) {$0$};
  \node[font=\tiny,fill=gray] (n1) at (0,2) {$1$};
  \node[font=\tiny,fill=gray] (n5) at (1,1) {$2$};
  \node[font=\tiny,fill=gray] (n7) at (2,0) {$3$};
  \node[font=\tiny,fill=gray] (n8) at (3,0) {$4$};
  \node[font=\tiny,fill=gray] (n4) at (3,2) {$5$};
  \node[font=\tiny,fill=gray] (n3) at (2,2) {$6$};
  \node[font=\tiny,fill=gray] (n6) at (2,1) {$7$};
  \node[font=\tiny,fill=gray] (n2) at (1,2) {$8$};

  \draw[thick] (n0) -- (n1);
  \draw[thick] (n1) -- (n5);
  \draw[thick] (n5) -- (n7);
  \draw[thick] (n7) -- (n8);
  \draw[thick] (n8) -- (n4);
  \draw[thick] (n4) -- (n3);
  \draw[thick] (n3) -- (n6);
  \draw[thick] (n5) -- (n2);
\end{tikzpicture}

\end{tabular}
\end{center}
\caption{A sample graph $G=(V,E)$ and the DFS traversal as per
  Algorithm~\ref{alg:dfs} starting from the topmost node. Arcs are
  examined in counter-clockwise order, starting from lower left. Node labels are the DFS numbers, and
  tree edges are indicated as heavy, undirected edges. Arcs disappear
  as they are being eliminated, leaving at the end the heavy DFS tree.}
\label{fig:dfs}
\end{figure}
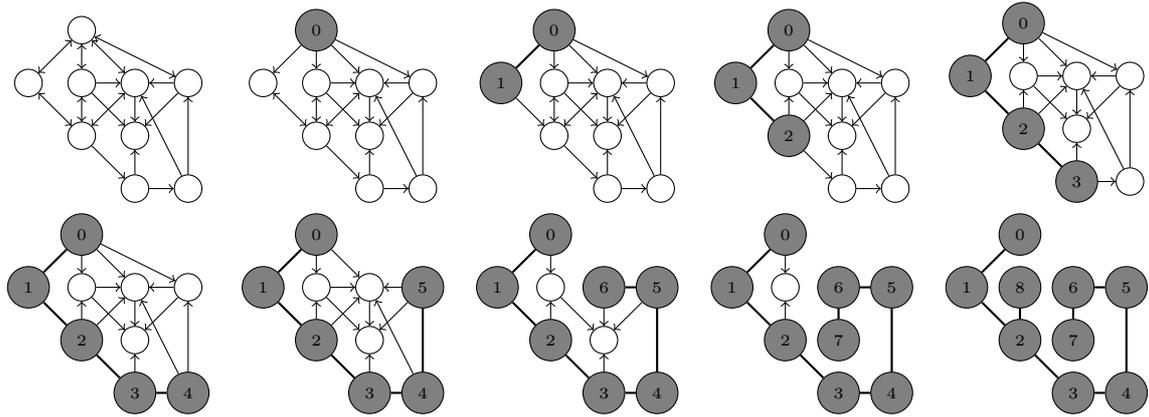

\section{Breadth-First Search}

The arc elimination idea can also be used for parallel Breadth-First
Search, as shown in Algorithm~\ref{alg:bfs}. The algorithm has the
same structure as standard, ``forwards''
BFS~\cite{CormenLeisersonRivestStein09}, but performs parallel arc
elimination as new, unexplored vertices are added to the queue for the
next level. This ensures that each reachable vertex is explored once.

\begin{algorithm}
\caption{Parallel Breadth-First Search from start vertex $s\in
  Q$. Vertices that are reachable from $s$ will be assigned a BFS
  number starting from $a$; also the distance from $s$ (in smallest
  number of arcs) will be computed.}
\label{alg:bfs}
\begin{algorithmic}
\STATE\textbf{Procedure} $\mbox{BFS}(s,G,a)$:
\PAR{$0\leq i<\ADJ[s].\indeg$}
\STATE $(u,j)\gets\ADJ[s].\inv[i]$
\STATE $\eliminate(G,u,j)$
\ENDPAR
\STATE $l\gets 0$
\STATE $\ADJ[s].\bfsnum\gets a$
\STATE $\ADJ[u].\bfslev\gets l$
\STATE $Q.\enque(s)$ \COMMENT{Start vertex $s$ visited}
\STATE $Q'\gets \emptyset$
\REPEAT
\STATE $l\gets l+1$ \COMMENT{Next level}
\REPEAT
\STATE $u\gets Q.\deque()$
\WHILE[As long as there are un-eliminated arcs]{$\ADJ[u].\firsti<\ADJ[u].\outdeg$}
\STATE $i\gets \ADJ[u].\firsti$
\STATE $v\gets\ADJ[u].\outv[i]$ 
\PAR{$0\leq j<\ADJ[v].\indeg$}
\STATE $(w,k)\gets\ADJ[v].\inv[j]$
\STATE $\eliminate(G,w,k)$
\ENDPAR
\STATE $a\gets a+1$ \COMMENT{Next vertex}
\STATE $\ADJ[v].\bfsnum \gets a$
\STATE $\ADJ[v].\bfslev \gets l$
\STATE $\ADJ[v].\parent \gets u$
\STATE $Q'.\enque(v)$ \COMMENT{Vertex $v$ has now been visited, enque for next level}
\ENDWHILE
\UNTIL{$Q=\emptyset$}
\STATE $Q\gets Q'$
\UNTIL{$Q=\emptyset$}
\end{algorithmic}
\end{algorithm}

\begin{figure}
\begin{center}
\begin{tabular}{ccccc}
\begin{tikzpicture}
  [scale=.7,auto=left,every node/.style={circle,draw,fill=none}]

  \foreach \pos/\name in {
    {(1,3)/n0},
    {(0,2)/n1}, {(1,2)/n2}, {(2,2)/n3}, {(3,2)/n4},
    {(1,1)/n5}, {(2,1)/n6},
    {(2,0)/n7}, {(3,0)/n8}}{
    \node (\name) at \pos {};
  }

  \foreach \from/\to in {n0/n1,n0/n2,n0/n3,n0/n4,n1/n5,n2/n5,n3/n5}{
    \draw[<->] (\from) -- (\to);
  }
  \foreach \from/\to in {n2/n3,n4/n3,n2/n6,n3/n6,n4/n6,n5/n7,n7/n6,n7/n8,n8/n3,n8/n4}{
    \draw[->] (\from) -- (\to);
  }
\end{tikzpicture}
&
\begin{tikzpicture}
  [scale=.7,auto=left,every node/.style={circle,draw,fill=none}]

  \node[font=\tiny,fill=gray] (n0) at (1,3) {$0$};
  \foreach \pos/\name in {
    {(0,2)/n1}, {(1,2)/n2}, {(2,2)/n3}, {(3,2)/n4},
    {(1,1)/n5}, {(2,1)/n6},
    {(2,0)/n7}, {(3,0)/n8}}{
    \node (\name) at \pos {};
  }

  \foreach \from/\to in {n1/n5,n2/n5,n3/n5}{
    \draw[<->] (\from) -- (\to);
  }
  \foreach \from/\to in {n0/n1,n0/n2,n0/n3,n0/n4,n2/n3,n4/n3,n2/n6,n3/n6,n4/n6,n5/n7,n7/n6,n7/n8,n8/n3,n8/n4}{
    \draw[->] (\from) -- (\to);
  }
\end{tikzpicture}
&
\begin{tikzpicture}
  [scale=.7,auto=left,every node/.style={circle,draw,fill=none}]

  \node[font=\tiny,fill=gray] (n0) at (1,3) {$0$};
  \node[font=\tiny,fill=gray] (n1) at (0,2) {$1$};
  \foreach \pos/\name in {
    {(1,2)/n2}, {(2,2)/n3}, {(3,2)/n4},
    {(1,1)/n5}, {(2,1)/n6},
    {(2,0)/n7}, {(3,0)/n8}}{
    \node (\name) at \pos {};
  }

  \foreach \from/\to in {n2/n5,n3/n5}{
    \draw[<->] (\from) -- (\to);
  }
  \foreach \from/\to in {n1/n5,n0/n2,n0/n3,n0/n4,n2/n3,n4/n3,n2/n6,n3/n6,n4/n6,n5/n7,n7/n6,n7/n8,n8/n3,n8/n4}{
    \draw[->] (\from) -- (\to);
  }
  \draw[thick] (n0) -- (n1);
\end{tikzpicture}
&
\begin{tikzpicture}
  [scale=.7,auto=left,every node/.style={circle,draw,fill=none}]

  \node[font=\tiny,fill=gray] (n0) at (1,3) {$0$};
  \node[font=\tiny,fill=gray] (n1) at (0,2) {$1$};
  \node[font=\tiny,fill=gray] (n2) at (1,2) {$2$};
  \foreach \pos/\name in {
    {(2,2)/n3}, {(3,2)/n4},
    {(1,1)/n5}, {(2,1)/n6},
    {(2,0)/n7}, {(3,0)/n8}}{
    \node (\name) at \pos {};
  }

  \foreach \from/\to in {n3/n5}{
    \draw[<->] (\from) -- (\to);
  }
  \foreach \from/\to in {n1/n5,n0/n3,n0/n4,n2/n3,n4/n3,n2/n6,n3/n6,n4/n6,n2/n5,n5/n7,n7/n6,n7/n8,n8/n3,n8/n4}{
    \draw[->] (\from) -- (\to);
  }
  \draw[thick] (n0) -- (n1);
  \draw[thick] (n0) -- (n2);
\end{tikzpicture}
&
\begin{tikzpicture}
  [scale=.7,auto=left,every node/.style={circle,draw,fill=none}]

  \node[font=\tiny,fill=gray] (n0) at (1,3) {$0$};
  \node[font=\tiny,fill=gray] (n1) at (0,2) {$1$};
  \node[font=\tiny,fill=gray] (n2) at (1,2) {$2$};
  \node[font=\tiny,fill=gray] (n3) at (2,2) {$3$};
  \foreach \pos/\name in {
    {(3,2)/n4},
    {(1,1)/n5}, {(2,1)/n6},
    {(2,0)/n7}, {(3,0)/n8}}{
    \node (\name) at \pos {};
  }

  \foreach \from/\to in {n1/n5,n0/n4,n2/n6,n3/n5,n3/n6,n4/n6,n2/n5,n5/n7,n7/n6,n7/n8,n8/n4}{
    \draw[->] (\from) -- (\to);
  }
  \draw[thick] (n0) -- (n1);
  \draw[thick] (n0) -- (n2);
  \draw[thick] (n0) -- (n3);
\end{tikzpicture}
\\
\begin{tikzpicture}
  [scale=.7,auto=left,every node/.style={circle,draw,fill=none}]

  \node[font=\tiny,fill=gray] (n0) at (1,3) {$0$};
  \node[font=\tiny,fill=gray] (n1) at (0,2) {$1$};
  \node[font=\tiny,fill=gray] (n2) at (1,2) {$2$};
  \node[font=\tiny,fill=gray] (n3) at (2,2) {$3$};
  \node[font=\tiny,fill=gray] (n4) at (3,2) {$4$};
  \foreach \pos/\name in {
    {(1,1)/n5}, {(2,1)/n6},
    {(2,0)/n7}, {(3,0)/n8}}{
    \node (\name) at \pos {};
  }

  \foreach \from/\to in {n1/n5,n2/n6,n3/n5,n3/n6,n4/n6,n2/n5,n5/n7,n7/n6,n7/n8}{
    \draw[->] (\from) -- (\to);
  }
  \draw[thick] (n0) -- (n1);
  \draw[thick] (n0) -- (n2);
  \draw[thick] (n0) -- (n3);
  \draw[thick] (n0) -- (n4);
\end{tikzpicture}
&
\begin{tikzpicture}
  [scale=.7,auto=left,every node/.style={circle,draw,fill=none}]

  \node[font=\tiny,fill=gray] (n0) at (1,3) {$0$};
  \node[font=\tiny,fill=gray] (n1) at (0,2) {$1$};
  \node[font=\tiny,fill=gray] (n2) at (1,2) {$2$};
  \node[font=\tiny,fill=gray] (n3) at (2,2) {$3$};
  \node[font=\tiny,fill=gray] (n4) at (3,2) {$4$};
  \node[font=\tiny,fill=gray] (n5) at (1,1) {$5$};
  \foreach \pos/\name in {
    {(2,1)/n6},
    {(2,0)/n7}, {(3,0)/n8}}{
    \node (\name) at \pos {};
  }

  \foreach \from/\to in {n2/n6,n3/n6,n4/n6,n5/n7,n7/n6,n7/n8}{
    \draw[->] (\from) -- (\to);
  }
  \draw[thick] (n0) -- (n1);
  \draw[thick] (n0) -- (n2);
  \draw[thick] (n0) -- (n3);
  \draw[thick] (n0) -- (n4);
  \draw[thick] (n1) -- (n5);
\end{tikzpicture}
&
\begin{tikzpicture}
  [scale=.7,auto=left,every node/.style={circle,draw,fill=none}]

  \node[font=\tiny,fill=gray] (n0) at (1,3) {$0$};
  \node[font=\tiny,fill=gray] (n1) at (0,2) {$1$};
  \node[font=\tiny,fill=gray] (n2) at (1,2) {$2$};
  \node[font=\tiny,fill=gray] (n3) at (2,2) {$3$};
  \node[font=\tiny,fill=gray] (n4) at (3,2) {$4$};
  \node[font=\tiny,fill=gray] (n5) at (1,1) {$5$};
  \node[font=\tiny,fill=gray] (n6) at (2,1) {$6$};
  \foreach \pos/\name in {
    {(2,0)/n7}, {(3,0)/n8}}{
    \node (\name) at \pos {};
  }

  \foreach \from/\to in {n5/n7,n7/n8}{
    \draw[->] (\from) -- (\to);
  }
  \draw[thick] (n0) -- (n1);
  \draw[thick] (n0) -- (n2);
  \draw[thick] (n0) -- (n3);
  \draw[thick] (n0) -- (n4);
  \draw[thick] (n1) -- (n5);
  \draw[thick] (n2) -- (n6);
\end{tikzpicture}
&
\begin{tikzpicture}
  [scale=.7,auto=left,every node/.style={circle,draw,fill=none}]

  \node[font=\tiny,fill=gray] (n0) at (1,3) {$0$};
  \node[font=\tiny,fill=gray] (n1) at (0,2) {$1$};
  \node[font=\tiny,fill=gray] (n2) at (1,2) {$2$};
  \node[font=\tiny,fill=gray] (n3) at (2,2) {$3$};
  \node[font=\tiny,fill=gray] (n4) at (3,2) {$4$};
  \node[font=\tiny,fill=gray] (n5) at (1,1) {$5$};
  \node[font=\tiny,fill=gray] (n6) at (2,1) {$6$};
  \node[font=\tiny,fill=gray] (n7) at (2,0) {$7$};
  \foreach \pos/\name in {
    {(3,0)/n8}}{
    \node (\name) at \pos {};
  }

  \foreach \from/\to in {n7/n8}{
    \draw[->] (\from) -- (\to);
  }
  \draw[thick] (n0) -- (n1);
  \draw[thick] (n0) -- (n2);
  \draw[thick] (n0) -- (n3);
  \draw[thick] (n0) -- (n4);
  \draw[thick] (n1) -- (n5);
  \draw[thick] (n2) -- (n6);
  \draw[thick] (n5) -- (n7);
\end{tikzpicture}
&
\begin{tikzpicture}
  [scale=.7,auto=left,every node/.style={circle,draw,fill=none}]

  \node[font=\tiny,fill=gray] (n0) at (1,3) {$0$};
  \node[font=\tiny,fill=gray] (n1) at (0,2) {$1$};
  \node[font=\tiny,fill=gray] (n2) at (1,2) {$2$};
  \node[font=\tiny,fill=gray] (n3) at (2,2) {$3$};
  \node[font=\tiny,fill=gray] (n4) at (3,2) {$4$};
  \node[font=\tiny,fill=gray] (n5) at (1,1) {$5$};
  \node[font=\tiny,fill=gray] (n6) at (2,1) {$6$};
  \node[font=\tiny,fill=gray] (n7) at (2,0) {$7$};
  \node[font=\tiny,fill=gray] (n8) at (3,0) {$8$};

  \draw[thick] (n0) -- (n1);
  \draw[thick] (n0) -- (n2);
  \draw[thick] (n0) -- (n3);
  \draw[thick] (n0) -- (n4);
  \draw[thick] (n1) -- (n5);
  \draw[thick] (n2) -- (n6);
  \draw[thick] (n5) -- (n7);
  \draw[thick] (n7) -- (n8);
\end{tikzpicture}
\end{tabular}
\end{center}
\caption{The sample graph $G=(V,E)$ and the BFS traversal as per
  Algorithm~\ref{alg:bfs} starting from the topmost node. Arcs are
  examined in counter-clockwise order, starting from lower left. Node labels
  are the computed BFS numbers, and
  tree edges are indicated as heavy, undirected edges. Arcs disappear
  as they are being eliminated, leaving at the end the heavy BFS tree.}
\label{fig:bfs}
\end{figure}

\begin{proposition}
Algorithm~\ref{alg:bfs} computes an ordered Breadth-First Search
numbering and tree in $O(m/p+n)$ time steps using $p$ processors.
\end{proposition}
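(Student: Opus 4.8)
The plan is to prove correctness by induction on the Breadth-First Search level and then read off the time bound from the fact that every reachable vertex is visited exactly once, with the parallel work concentrated in arc elimination. The central device is a loop invariant for the outer \textbf{repeat}-loop: at the start of the iteration in which the level counter equals $l$, the queue $Q$ contains precisely the vertices at distance $l-1$ from $s$, in discovery order, each already carrying its final $\bfsnum$, its correct distance $l-1$ and its $\parent$; $Q'$ is empty; and, strengthening Invariant~\ref{inv:eliminated}, an arc $\arcg{u}{v}$ is still present in $G$ if and only if $v$ has not yet been visited, equivalently if and only if $d(s,v)\geq l$. The base case $l=1$ holds after the prologue of $\textsf{BFS}(s,G,a)$: the \textbf{par}-block deletes every arc into $s$, and $s$ is enqueued with distance $0$.

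For the inductive step I would analyse a single level-$l$ round. When a vertex $u$ is dequeued, $d(s,u)=l-1$ by the invariant, and the \textbf{while}-loop scans the out-arcs of $u$ in increasing index order; since ``not yet eliminated'' coincides with ``target not yet visited'', every out-neighbour $v$ reached here satisfies $d(s,v)=l$. The first non-eliminated out-arc of $u$ is located in $O(1)$ time via $\firsti$ and the doubly linked list; processing it visits $v$ --- all arcs into $v$ (in particular $\arcg{u}{v}$, whose elimination links index $i$ out of $u$'s list and advances $\firsti$) are removed in parallel, $v$ gets $\bfsnum$, its distance is set to $l$, $\parent\gets u$, and $v$ is appended to $Q'$. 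Because an arc vanishes only when its target is visited, every distance-$l$ vertex $v$ --- which has an incoming arc from some $u\in Q$ --- is, at the moment that $u$ is dequeued, either already in $Q'$ or still reachable along the surviving arc $\arcg{u}{v}$ and hence visited and enqueued by $u$; and no vertex whose distance differs from $l$ is touched during the round. Thus at the end of the round $Q'$ holds exactly the distance-$l$ vertices with correct data, every arc into a vertex of distance $\leq l$ is gone, and after $Q\gets Q'$ the invariant holds for $l+1$. Since the scanning order and the parent choice coincide with those of textbook BFS (the two-queue discipline drains all of level $l-1$ before level $l$ begins), the resulting numbering is the ordered BFS numbering; the loop terminates when $Q$ becomes empty, after the last non-empty level.

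The time bound is then routine. By the invariant each reachable vertex is dequeued exactly once, and each execution of the \textbf{while}-body visits exactly one new vertex, so the body is executed at most $n$ times; likewise the $\deque$ operations, \textbf{while}-condition tests, $Q\gets Q'$ assignments, and outer/inner \textbf{repeat}-loop iterations are all $O(n)$ in number and each does $O(1)$ sequential work apart from its \textbf{par}-block. When a vertex $v$ is visited, its \textbf{par}-block removes its at most $\indeg(v)$ incoming arcs in $O(\ceiling{\indeg(v)/p})=O(\indeg(v)/p+1)$ steps using $p$ processors, which requires only concurrent reading of the base address of $\ADJ[v].\inv$ and is therefore admissible on a CREW PRAM. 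Summing over all visited vertices, the elimination work is $O(\sum_{v\in V}(\indeg(v)/p+1))=O(m/p+n)$, there are $n$ synchronizations (one \textbf{par}-block per visited vertex, plus the prologue for $s$), and adding the $O(n)$ sequential overhead gives the claimed $O(m/p+n)$ time bound, hence linear speed-up for $p\leq m/n$.

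The hard part is the correctness invariant, not the timing. One has to argue that the interleaved, asynchronous eliminations performed while different level-$(l-1)$ vertices are being processed never delete an arc to a still-unvisited vertex prematurely, while every distance-$l$ vertex is nevertheless discovered within round $l$; this rests entirely on the equivalence ``$\arcg{u}{v}$ present $\iff$ $v$ not yet visited'', on $\eliminate$ keeping $\firsti$ aimed at the first surviving out-arc even when that arc is removed as a side effect of visiting its target from some other vertex, and on the two-queue discipline. A minor technicality to dispatch is that the \textbf{par}-blocks write only to pairwise-distinct arcs, hence are CREW-admissible, as long as $G$ has no parallel arcs; with parallel arcs the finitely many colliding $\eliminate$ calls are serialised at no asymptotic cost.
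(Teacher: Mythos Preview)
Your proof is correct and follows essentially the same approach as the paper: both rely on Invariant~\ref{inv:eliminated} to ensure each reachable vertex is visited exactly once, derive the time bound by summing the parallel elimination cost $O(\indeg(v)/p+1)$ over all visited vertices, and argue correctness from the fact that $Q$ always holds exactly the vertices of the current level. Your treatment is considerably more detailed than the paper's---you spell out the full loop invariant (including the equivalence ``$\arcg{u}{v}$ present $\iff$ $v$ not yet visited''), carry through the induction explicitly, and flag the CREW-admissibility and parallel-arc technicalities---whereas the paper's proof is a brief sketch that defers to ``as for standard BFS'' for the level invariant; but the underlying argument is the same.
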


\begin{proof}
Again by Invariant~\ref{inv:eliminated}, once a vertex has been
visited it will never be considered again, and therefore arc
elimination is performed once for each reachable vertex. From this the
time bound follows. For each vertex in $Q$ for some level, the
un-eliminated arcs are considered in order determined by the
representation of $G$, and as each unvisited vertex is put into the
queue $Q'$ for the next level, all incoming arcs are eliminated. This
in particular ensures that there are no arcs between vertices in $Q'$.
As for standard BFS, all vertices in $Q$ before the start of an
iteration of the innermost repeat loop have the same distance to the
source vertex, from which correctness follows.
\end{proof}

It is especially worth noticing that there are no arcs between nodes
in $Q'$, the queue being filled for the next iteration. An example
execution of the algorithm is given in Figure~\ref{fig:bfs}. The important
property of BFS is that vertices are explored in least recently
visited order; it is therefore, also in the arc elimination algorithm,
possible to dispense with the explicit next level queue $Q'$ and do
with only a single \textbf{repeat}-loop~\cite{Tarjan83}.

\section{Discussion}

The time bounds for both Depth- and Breadth-First Search algorithms
guarantee linear speed-up when $p\leq m/n$, or equivalently $m\geq
pn$; that is, good speed-up is possible for graphs with average degree
larger than the number of processors.  The algorithms presented here
are complementary to the standard, textbook, ``forwards'' procedures
for DFS and BFS~\cite{CormenLeisersonRivestStein09}.  Standard DFS
where arcs are examined only in the forwards direction has no
parallelism; in contrast, the algorithm given here can perform the arc
elimination fully in parallel.  Typical, parallel Breadth-First Search
algorithms exploit parallelism mostly by considering active vertices
in the queue for each level in parallel, see,
e.g.,~\cite{BaderMadduri06,LeisersonSchardl10,ShunBlelloch13}. Although
the forward edges can also be explored in parallel, compaction or
other data structure operations are necessary for resolving/avoiding
update conflicts and maintaining the queue for the next level. The
parallel running time of such algorithms will typically be bounded by
the diameter of the graph, but either at the cost of more work
incurred by data structure operations, or by requiring stronger,
atomic operations.  The arc elimination approach does not require
either of these means (data structures, compaction, atomic
operations), and has exploitable parallelism independent of the BFS
structure of the graphs, as long as the total number of edges $m$
satisfies $m\geq np$.

\bibliographystyle{abbrv}
\bibliography{traff,parallel}

\end{document}